\RequirePackage{fix-cm}

\documentclass[a4paper, 11pt]{article}     

\usepackage{graphicx}
\usepackage{latexsym}
\usepackage{amsmath,amssymb,amsfonts,makeidx,dsfont}

\usepackage[margin = 3cm]{geometry}

\usepackage{natbib}
\usepackage{color}
\usepackage{lscape} 
\usepackage{lineno} 
\usepackage{color}
\DeclareGraphicsExtensions{.bmp,.png,.pdf,.jpg,.eps}

\newtheorem{theo}{Theorem}

\newtheorem{proof}{Proof}


\newcommand{\Prob}{\mathbb{P}}
\newcommand{\Expec}{\mathbb{E}}
\newcommand{\Var}{\text{Var}}
\newcommand{\Cov}{\text{Cov}}
\newcommand{\histo}{\hat{f}^{\textsc{h}}_n}
\newcommand{\histom}{\bar{f}^{\textsc{h}}_n}
\newcommand{\kde}{\hat{f}^{\textsc{kde}}_n}
\newcommand{\fp}{\hat{f}^{\textsc{fp}}_n}
\newcommand{\fstar}{\hat{f}^*}
\newcommand{\bagfp}{\hat{f}^{\textsc{bagfp}}_n}
\newcommand{\baghist}{\hat{f}^{\textsc{baghist}}_n}
\newcommand{\bagkde}{\hat{f}^{\textsc{bagkde}}_n}

\begin{document}

\title{Bagging of Density Estimators}

\author{Mathias Bourel \and Jairo Cugliari}

\maketitle

\begin{abstract}
In this work we give new density estimators by averaging classical density estimators such as the histogram, the frequency polygon and the kernel density estimators obtained over different bootstrap samples of the original data. We prove the $L^2$-consistency of these new estimators and compare them to several similar approaches by simulations. Based on them, we give also a way to construct non parametric pointwise confidence intervals for the target density.\\
\textbf{keywords}: {density estimation \and aggregation \and bagging \and histogram \and polygon frequency \and kernel density estimator. }

\end{abstract}

\section{Introduction}\label{sec:intro}

Ubiquitous in data analysis, density estimation techniques are certainly the most
used unsupervised learning technique on low dimension. 
Whether for studying asymmetry, normality, residual diagnostic or bump hunting among others, one usually relies 
on a visual inspection of a plot of the density to take a primary decision, mostly in one or two dimensions.

The general aim is to gather
basic information about the unobserved data generation mechanism out of a sample
of $n$ observations say $x_1, \dots, x_n$. One usually supposes that the observations are realizations of a random variable $X$
that admits a probability density function $f$ (i.e. $f$ is non negative and integrates
1). Then, the learning task is to estimate $f$ as accurately as possible. First, by
obtaining a point wise estimate $\hat f(x)$ of $f(x)$ for all $x \in \mathbb{R}^d$, and second, by assessing the uncertainty of the point wise estimate through the
construction of a confidence interval for $f$. Of course both problems implies
different difficulties and involves specific techniques. In what follows, we focus on 
nonparametric approaches for the first objective and give a possible way to construct a 
pointwise confidence interval for the true density.
In particular we center our attention on three classes of \textit{base} or individual estimators for the density: 
histograms, frequency polygons and the kernel density estimator. We postpone the
formal definition up to the next section but we provide a discussion guided by 
intuitive descriptions here. 

Histograms are undoubtedly the most popular construction for density estimation.
They rely on the best constant-wise approximation of $f$ given by the data
using a binning argument. 
The power of this simple construction combined with the ease of its interpretation
makes them accessible to non technical users. Besides, theoretical properties
can be derived showing that histograms are consistent estimators.
Some of the lacks of the histograms are inherent to the constant level for at each partition, 
on one hand their discontinuities and on the other hand they have null derivative everywhere. 
Frequency polygons are constructed on top of histograms (and so take profit of the
binning advantages) providing a linear piecewise estimator.
Although the added regularity was once pointed out as a flaw (see \cite{fisher1932})
it is now well known that it increases the quality of the estimator. Theoretically
this is shown by a faster rate of convergence (cf. section \ref{sec:art}).
Further regularity can be gained using kernel density estimators. Essentially
one first picks a kernel function with the desired degree of regularity for the
final estimate. Then, the empirical measure is convolved to produce the kernel 
density estimation of $f$. 
It has been proved in \cite{scott1985} that the rates of convergence of the frequency polygon are 
similar to the kernel density estimator.
All the three approaches are exhaustively studied both from on practice and theory
as individual estimators and a general reference for the subject is \cite{scott2015multivariate}. 
However, a reasonable question is to ask whether further
improvement can be achieved from these construction by means of aggregation schemes.

Ensemble learning or aggregation methods are increasingly used in the supervised framework: these methods combine intermediate predictors to obtain an aggregated model with the aim to obtain a better estimator. Bagging \citep{Brei-bag}
(Bootstrap and AGGregatING), Boosting \citep{FSc}, Stacking \citep{Wolpert}, and Random Forests \citep{Brei-RF} have been broadly studied in the case of classification (principally binary classification) or regression from the theoretical viewpoint and have very high performances when tested over tens of various datasets selected from the machine learning benchmark. Several extensions are still under study: multivariate regression, multiclass classification, and adaptation to functional data or time series. Very few developments exist for ensemble learning for unsupervised techniques such as clustering analysis and density estimation. Only on few works several authors look at the adaptation of the aggregation procedure to estimate a density under somehow restrictive conditions. One of the first is the mean of the most simplest density estimator, the histograms, each one constructed over several different deterministic grids in Average Shifted Histograms, ASH \citep{scottash}. With a combination of several kernel density estimators with different bandwidths, often in a normal context, and varying the form of the aggregation we can cite  \cite{Ridgeway}, \cite{Glodek2013}, \cite{Song}, \cite{Rosset}, \cite{Smyth} and \cite{Rigo}.

Another kind of aggregation can be obtained by introducing randomness in the individual estimators. In \cite{bourel-ghattas} 
the authors include randomness in the construction of the intermediate histograms using then different aggregation
schemes: AggregHist, using simple aggregation, BagHist using Bagging or StackHist using Stacking. 
Mathematically well sound, these approaches were explored thorough empirical simulation without theoretical framework. 
The Random Average Shifted Histogram, RASH \citep{Bourel13} is constructed as the mean average of different histograms, each one constructed over a random translated grid of the initial breakpoints of the initial histogram. RASH is show to be
consistent and perform well in comparison to the precedent aggregation schemes.

In this work, we explore the contribution of Bagging to the density estimation task (like for BagHist) where the intermediate estimators are either histograms, frequency polygons or kernel density estimators. 
This article is organized as follow. Section \ref{sec:art} introduces notation and reviews the framework context we need about density estimators. In Section \ref{sec:bagfp} we present our three methods: BagHist, BagFP and BagKDE and establish a theoretical result about their consistency. The results of comprehensive simulations are the object of Section \ref{sec:experiments}. For this, we use several target densities available on literature to evaluate the performance of our estimators and comparing them to classical density estimators. We also explore the construction of pointwise confidence intervals as an approximation of a confidence band using the bootstrap procedure. The work concludes with a discussion in Section \ref{sec:discussion}.

\section{Some density estimators\label{sec:art}}

We look at the principal results using different density estimators. While for the histogram and 
the kernel density estimator these results are quite popular we give some detail in order to fix notation.
For a detailed exposition see \cite{scott2015multivariate}.

In all cases, the starting point is an independent and identically distributed sample $x_1,\dots, x_n$ of a real random variable with density $f$.

\begin{itemize}
\item \emph{Histogram}.
Let $B_{j}=\left[jh,(j+1)h\right]$ be a set of intervals defined over the support of $f$, and $h \to 0$ as $n \to +\infty$.

The ordinary histogram is defined as:
\begin{equation} \histo(x)= \frac{1}{nh} \sum \limits_{i=1}^{n} \sum \limits_{j \in \mathbb{Z}}
\mathds{1}_{B_{j}}(x_i)\mathds{1}_{B_{j}}(x) =\frac{\nu_j}{nh} \label{eq:histo}
\end{equation}
where $\nu_j \sim \text{Bin}(n,p_j)$, with $p_j=\int_{B_{j}} f(t)dt$, is the number of observations of the sample that fall in bin $B_{j}$.
If $x \in B_{j}$, we have that:
$$\mathds{E}\bigr(\histo(x)\bigl)=\frac{1}{nh}\sum \limits_{i=1}^{n}
\sum \limits_{j \in\mathbb{Z}} \mathds{P}(x_i \in B_{j})\mathds{1}_{B_{j}}(x) =\frac{1}{nh} n \mathds{P}(x_i \in
B_{j})=\frac{p_j}{h}=f(\xi_j)$$

$$\Var\bigl(\histo(x)\bigr)= \frac{1}{n^2h^2}\Var
(\nu_j) =
\frac{p_j(1-p_j)}{nh^2}\leq \frac{p_j}{nh^2}=\frac{f(\xi_j)}{nh}$$

for some $\xi_j \in B_j$. For a $x \in B_{j}$ a fixed point, when $h \to 0, n \to \infty$ and $nh \to \infty$, we get the classical properties for the histogram:
$$\mathds{E}\bigr(\histo(x)\bigl) \to f(x), \ \Var\bigl(\histo(x)\bigr) \to 0.$$
and, moreover, if $f$ is locally Lipschitz the histogram  is mean square consistent, i.e $\text{MSE}\left(\histo(x)\right)=\text{Bias}^{2}\left(\histo(x)\right)+\Var\left(\histo(x)\right)\to 0$.

It is widely used in many fields, because of its computational simplicity. The histogram depends on two  parameters: the bin width $h$ and an origin $x_0$ to fix the grid.
There is a huge literature that proposes several optimal choices for
$h$ using different criteria. If we suppose that the underlying density $f$ is Gaussian, it can be shown (see \cite{scott1979}) that an optimal choice for $h$ is of order $n^{-1/3}$. With this value the histogram has a rate of convergence of order $n^{-2/3}$ with respect to the Mean Integrated Squared Error (MISE).

\item \emph{Frequency Polygon}. Frequency polygons are constructed on top of histograms connecting with straight lines the midpoint of two consecutive bin values. The expression of the frequency polygon for an $x \in B_j=\left[ (j-1/2)h, (j+1/2)h\right]$ is 
\begin{equation} \label{eq:fp}
  \fp(x) = \left(\frac{1}{2}+j-\frac{x}{h} \right)\frac{\nu_j}{nh} 
                 + \left(\frac{1}{2}-j + \frac{x}{h} \right)\frac{\nu_{j+1}}{nh}
\end{equation}
The frequency polygon was deeply studied in \cite{scott1985}. With respect to the histogram, it
presents the the advantages of being continuous and smooth. Under weak conditions, an optimal choice for $h$ is
of order $n^{-1/5}$ and it achieves a rate of convergence of order order $n^{-4/5}$ with respect to the MISE (\cite{scott1985}).

\item \emph{Kernel Density Estimators}. A \emph{Kernel Density Estimator, KDE,} is a function defined by \begin{equation}\kde(x)=\frac{1}{nh} \sum \limits_{i=1}^n K\left(\frac{x-x_i}{h} \right) \label{eq:kde} \end{equation} for all $x\in \mathbb{R}$ where $K$ is a kernel function, i.e a non-negative, symmetric and unimodal function such that $\int K(u)\,du =1$. Parameter $h$ is called the bandwidth of the estimator $\kde$, who inherits all the mathematical properties of $K$. The function $K$ indicates the weight that observation $x_i$ has in the estimation of $x$: observations close to $x$ are weighted more important. It can be shown that $$\Expec\left(\kde(x)\right)=f(x)+\frac{f''(x)}{2}\mu_2(K)h^2+o(h^2)$$
$$\Var\left(\kde(x)\right)=\frac{||K||_2^2f(x)}{nh} + o\left(\frac{1}{nh}\right)$$
where $\mu_r(K)=\int u^rK(u)\,du$. 
When $h \to 0$ and $nh \to \infty$, we get the classical properties as for the histogram:
$$\mathds{E}\bigr(\kde(x)\bigl) \to f(x), \ \Var\bigl(\kde(x)\bigr) \to 0.$$
As for the histogram, if $h$ is large the variance decreases but the bias is large. On the other hand, if $h$ is small, the bias is small but the variance is large. The optimal rate of convergence of KDE is of order 
$n^{-4/5}$ as for the frequency polygon.
\end{itemize}

\section{Bagging of estimators}\label{sec:bagfp}

The bootstrap method was introduced in \cite{efron1979} and have the purpose of doing statistical inference using resamples of the original set of data. More precisely, if we have a data set $\mathcal{L}=\{x_1,x_2,\dots,x_n\}$ with distribution $F$, the non parametric bootstrap procedure consists to draw, with replacement, a new sample $\mathcal{L}^*=\{x_{1}^{*},x_{2}^{*},\dots, x_{n}^{*}\}$ from $\mathcal{L}$ of the same size. Then, the sample $\mathcal{L}^*$ has a
distribution $F_n$, the empirical distribution of $\mathcal{L}$. 
The main idea is that the sample $\mathcal{L}^*$ is to the original sample $\mathcal{L}$ what the sample $\mathcal{L}$ is to the population, so the method treats the empirical of a distribution of sample data as the true distribution. 
 
Reiterating this procedure several times and obtaining many bootstrap samples is a cornerstone
to the construction of several bootstrap based approaches. 

The bootstrap has three big applications: bias correction, construction of confidence interval and hypothesis testing \citep{efron}. 

However, the use of bootstrap in nonparametric density estimation requires some caution, particularly concerning the bias of the estimators. In our setting, let $\widehat{f}$ be a nonparametric density estimator for $f$ obtained from the sample $\mathcal{L}$.

Now we draw a bootstrap sample $\mathcal{L^{*}}$ of $\mathcal{L}$. 

In the case of the kernel density estimator, the estimation over $\mathcal{L}^{*}$ is without bias,
\begin{align*}
\mathds{E} \left(  \hat{f}^{*}(x)|\mathcal{L}\right)=& \frac{1}{nh} \sum \limits_{i=1}^{n} \mathds{E}\left(K\left( \frac{x-x_{i}^{*}}{h}\right)\right)\\
=& \frac{1}{h} \mathds{E}\left(K\left( \frac{x-x_{i}^{*}}{h}\right)\right)=\frac{1}{h}  \sum \limits_{y \in R_{x_{i}^{*}}} K\left( \frac{x-y}{h}\right)\mathds{P}(x_{i}^{*}=y)\\
=&\frac{1}{h} \sum \limits_{j=1}^{n} K\left( \frac{x-x_j}{h}\right)\frac{1}{n}=\frac{1}{nh} \sum \limits_{j=1}^{n} K\left( \frac{x-x_j}{h}\right)=\hat{f}(x).
\end{align*}
This simple results holds also for the histogram and for the frequency polygon too (see demonstration of theorem \ref{teo}), that is $\mathds{E}(\widehat{f}^{*}(x)|\mathcal{L})=\widehat{f}$.
The fact is treated in \cite{hall1997bootstrap} in this terms: ``far from accurately estimating the substantial bias of $f$, the bootstrap sets the bias of this kind of density estimator equal to zero''. 

In supervised learning, the main application of bootstrap is definitely the Bagging.
It is a parallel aggregation method of individual entities that at each step $b$ draw a bootstrap sample $\mathcal{L}^{*}_{b}$ of the original sample $\mathcal{L}$ and compute an estimator (a classifier in classification or a predictor in regression) over $\mathcal{L}^{*}_{b}$. For an input $x$, the output of the Bagging method is the average in regression or the majority rule in classification of the intermediate estimators at $x$. We follow this aggregation strategy to construct new density estimators of a density function $f$. Our procedures run as follows (Figure \ref{algoritmo}):

\begin{figure}[ht]
\begin{center}
\fbox{
\begin{minipage}{0.9\textwidth}
 Let $\mathcal{L} = \{x_1, \dots,x_n\}$ be a sample with unknown distribution $F$ admitting a 
density $f$. Also, considerer $\widehat{f}_n$ a density estimator evaluated in $\mathcal{L}$. 

For $b \in 1, \dots, B$:

\begin{enumerate}
\item obtain $\mathcal{L}^{*}_{b}=\{x_1^{*}, \dots, x_n^{*}\}$ a bootstrap sample from $\mathcal{L}$; 
\item construct $\fstar_b$ the density estimator obtained over this bootstrap sample. In particular the bandwidth $h$ is calculated over $\mathcal{L}^{*}_{b}$ 
\end{enumerate}

{\bf Output:} The final estimator is the simple pointwise average of the individual estimators 
i.e. $$\widehat{f}^{*} (x)= \frac{1}{B} \sum \limits_{b=1}^{B} \fstar_b (x)$$

\end{minipage}}
\end{center}
 \caption{Bagging of density estimators} \label{algoritmo}
 \end{figure}

To obtain the bagged histogram (BagHist) estimator $\baghist$ we simply use at each step $b$, histograms $\histo$ defined in \eqref{eq:histo} as $f_{b}^{*}$. Analogously for bagged frequency polygons (BagFP) $\bagfp$ and bagged kernel density estimators (BagKDE) $\bagkde$, we replace  $f_{b}^{*}$ with frequency polygon estimator $\fp$ (cf. Eq. \eqref{eq:fp}) or kernel density estimator $\kde$ (cf. Eq. \eqref{eq:kde}) respectively.
 
\subsection{$L^2$ consistency of the Bagging of estimators }
Here we prove $L^2$ consistency, for three estimators BagHist, BagFP and BagKDE which correspond to bagging of histograms, bagging of polygon frequencies and bagging of kernel density estimators proving that, since $h \to 0, n \to + \infty$ and $nh \to +\infty$, for all $x$ in bin $B_j$ (in case of histogram or frequency polygon) or for all $x \in \mathbb{R}$ (in case of kernel density estimator): $$
\Expec [(\baghist (x) - f(x))^2] \to 0,\,\,\,\,\,
\Expec [(\bagfp (x)   - f(x))^2] \to 0, \,\,\,\,\,$$
$$\Expec [(\bagkde (x)  - f(x))^2] \to 0.
$$

\begin{theo}\label{teo}
BagHist, BagFP and BagKDE are $L^2$-consistent.
\end{theo}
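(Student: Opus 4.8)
The plan is to reduce everything to the single structural fact already recorded above, that bootstrapping is \emph{conditionally unbiased} for the base estimator built on the original sample: $\Expec(\fstar_b(x)\mid\mathcal{L}) = \widehat{f}_n(x)$, where $\widehat{f}_n$ denotes $\histo$, $\fp$ or $\kde$ according to the case. First I would condition on $\mathcal{L}$. Since the $B$ bootstrap samples are drawn independently given $\mathcal{L}$, the estimators $\fstar_1(x),\dots,\fstar_B(x)$ are i.i.d.\ conditionally on $\mathcal{L}$, each with conditional mean $\widehat{f}_n(x)$ and common conditional variance $V_n(x):=\Var(\fstar_1(x)\mid\mathcal{L})$; hence $\Expec(\widehat{f}^{*}(x)\mid\mathcal{L})=\widehat{f}_n(x)$ and $\Var(\widehat{f}^{*}(x)\mid\mathcal{L})=V_n(x)/B$. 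A bias--variance split conditionally on $\mathcal{L}$, followed by taking expectations, then gives
\begin{equation*}
\Expec[(\widehat{f}^{*}(x)-f(x))^2] = \frac{1}{B}\,\Expec[V_n(x)] + \Expec[(\widehat{f}_n(x)-f(x))^2].
\end{equation*}
The second term is exactly the MSE of the corresponding classical estimator, which tends to $0$ under $h\to0$, $n\to\infty$, $nh\to\infty$ together with the usual mild regularity on $f$ (locally Lipschitz for the histogram, the condition of \cite{scott1985} for the frequency polygon, $f$ continuous at $x$ with bounded $f''$ for the KDE), as recalled in Section~\ref{sec:art}. So the whole problem collapses to proving $\Expec[V_n(x)]\to0$, and the factor $1/B$ makes this hold uniformly in $B\ge1$.

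Next I would estimate $\Expec[V_n(x)]$ case by case, using that conditionally on $\mathcal{L}$ a bootstrap draw lands in bin $B_j$ with probability $\widehat{p}_j:=\nu_j/n$, so that $\nu_j^*\mid\mathcal{L}\sim\text{Bin}(n,\widehat{p}_j)$ and $(\nu_j^*)_j$ is multinomial. For the histogram, $\fstar_b(x)=\nu_j^*/(nh)$ gives $V_n(x)=\widehat{p}_j(1-\widehat{p}_j)/(nh^2)\le \widehat{p}_j/(nh^2)$, hence $\Expec[V_n(x)]\le \Expec[\nu_j]/(n^2h^2)=p_j/(nh^2)=f(\xi_j)/(nh)\to0$. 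For the KDE, conditionally on $\mathcal{L}$ the summands $K((x-x_i^*)/h)$ are i.i.d., each equal to $K((x-x_\ell)/h)$ with probability $1/n$, so $V_n(x)\le n^{-2}h^{-2}\sum_\ell K((x-x_\ell)/h)^2$, and the change of variable $u=(x-y)/h$ yields $\Expec[V_n(x)]\le (nh)^{-1}\int K(u)^2 f(x-uh)\,du = \|K\|_2^2 f(x)(nh)^{-1}(1+o(1))\to0$. For the frequency polygon, $\fstar_b(x)$ is the convex combination $a\,\nu_j^*/(nh)+b\,\nu_{j+1}^*/(nh)$ with $a+b=1$ and $a,b\in[0,1]$; expanding the variance produces the term $2ab\,\Cov(\nu_j^*,\nu_{j+1}^*\mid\mathcal{L})=-2ab\,n\widehat{p}_j\widehat{p}_{j+1}\le0$, which may simply be discarded, leaving $V_n(x)\le (a^2\widehat{p}_j+b^2\widehat{p}_{j+1})/(nh^2)\le(\widehat{p}_j+\widehat{p}_{j+1})/(nh^2)$ and so $\Expec[V_n(x)]\le (f(\xi_j)+f(\xi_{j+1}))/(nh)\to0$. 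Along the way this also gives $\Expec(\fstar_b(x)\mid\mathcal{L})=\widehat{f}_n(x)$ for the histogram (since $\Expec(\nu_j^*\mid\mathcal{L})=n\widehat{p}_j=\nu_j$) and, by linearity, for the frequency polygon, which fills the gap deferred before the theorem.

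The argument is not deep; the delicate part is the bookkeeping of the conditioning. One must use the \emph{conditional} variance given $\mathcal{L}$ throughout --- the unconditional variance of $\fstar_b(x)$ also absorbs the variability of $\mathcal{L}$ itself and is not what the decomposition calls for --- and in the frequency-polygon case one must notice that the multinomial covariance is negative, hence harmless rather than an obstruction. A last point worth one line: the algorithm recomputes the bandwidth on each $\mathcal{L}^{*}_b$, but for the consistency statement it suffices to read $h$ as the deterministic sequence satisfying $h\to0$, $nh\to\infty$, exactly as the theorem hypothesizes.
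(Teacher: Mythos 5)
Your proposal is correct and follows essentially the same route as the paper's proof: condition on $\mathcal{L}$, use the conditional unbiasedness $\Expec(\fstar_b(x)\mid\mathcal{L})=\widehat{f}_n(x)$, and split the error into the averaged conditional bootstrap variance $\frac{1}{B}\Expec[V_n(x)]$ plus the error of the classical estimator, then control $\Expec[V_n(x)]$ case by case via the conditional binomial/multinomial (and resampling) structure. The only differences are cosmetic streamlining: you fold the bias and the classical variance into the known MSE consistency of $\histo$, $\fp$, $\kde$ instead of re-deriving them, and you dispose of the frequency-polygon cross term by noting the multinomial covariance is negative, which is tidier than the paper's explicit expansion.
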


\begin{proof}
We will give a global proof, inspired by \cite{hall1997bootstrap} and \cite{scott2015multivariate} to encompass the different methods, because the demonstration for all these estimators follows the same steps. We have to compute:
\begin{enumerate}
\item[(1)] for the expectation $\Expec(\hat{f}(x))=\Expec \left(\Expec [\hat{f} (x)| \mathcal{L}]\right)$ 
\item[(2)] and to calculate the variance we use the decomposition 
$$\text{Var}(\hat{f}(x))=\underbrace{\Expec(\Var(\hat{f}(x)|\mathcal{L}))}_{(A)}+\underbrace{\Var(\Expec(\hat{f}(x)|\mathcal{L}))}_{(B)}$$   

\end{enumerate}
Without loss of generality, in some calculation for histogram or frequency polygon, and with the aim to simplify notations, we will assume that $x \in B_0$.
\begin{enumerate}

\item[(1)] 

\begin{itemize}
\item \emph{BagHist.} If $x \in B_j=\left[ jh, (j+1)h\right]$ , then we have \begin{align*}\Expec [\baghist (x)| \mathcal{L} ] 
  =&  \Expec \left[ \frac{1}{B} \sum_{b=1}^{B} \fstar_b(x) | \mathcal{L} \right]  
  =  \Expec \left[ \fstar_b (x) | \mathcal{L} \right] 
  =  \Expec \left[ 
         \frac{\nu_{j}^{*}}{nh}
       \right] \\
 = &  \frac{\nu_{j}}{nh} =\histo(x)
 \end{align*}
and if $f$ is locally Lipschitz \begin{align*}\left| \Expec (\Expec [\hat{f} (x)| \mathcal{L}]) - f(x)\right|= &\left| \frac{p_j}{h}-f(x)\right|= \left| \frac{h f(\xi_j)}{h}-f(x)\right|\\ \leq & \gamma_j|\xi_j-x| \leq \gamma_j h \to 0 
\end{align*}
\item \emph{BagFP.} If $x \in B_j=\left[ (j-\frac{1}{2})h, (j+\frac{1}{2})h\right]$, then we have

\begin{align*} \Expec [\bagfp (x)| \mathcal{L} ] 
  = & \Expec \left[ \frac{1}{B} \sum_{b=1}^{B} \fstar_b(x) | \mathcal{L} \right]  
  =  \Expec \left[ \fstar_b (x) | \mathcal{L} \right] \\
  = &  \Expec \left[ 
         \left(\frac{1}{2}+j-\frac{x}{h} \right)\frac{\nu_{j}^{*}}{nh}
       + \left(\frac{1}{2} -j+ \frac{x}{h} \right) \frac{\nu_{j+1}^{*}}{nh}|\mathcal{L} \right] \\
  = & \left(\frac{1}{2} +j- \frac{x}{h} \right) \frac{\nu_{j}}{nh}  + \left(\frac{1}{2}-j\ + \frac{x}{h} \right) \frac{\nu_{j+1}}{nh}  =\fp(x)  \end{align*}
and, if $f$ has second derivative, if $x \in B_0$: $$\left|\Expec\left(\Expec [\bagfp (x)| \mathcal{L} ] \right)-f(x)\right| \approx \left|f''(\xi_0)(h^2-3x^2)\right| \leq 4h^2f''(\xi_0)\to 0$$ 

\item \emph{BagKDE.} For BagKDE we have:

\begin{align*}\Expec [\bagkde (x)| \mathcal{L} ] 
  = & \Expec \left[ \frac{1}{B} \sum_{b=1}^{B} \fstar_b(x) | \mathcal{L} \right]  
  =  \Expec \left[ \fstar_b (x) | \mathcal{L} \right] \\
  = &\Expec \left[ \frac{1}{nh}\sum \limits_{i=1}^{n} K\left(\frac{x-x_{i}^{*}}{h} \right) \right]=\kde \end{align*}
  
  and this is well known that $$\left|\Expec\left(\Expec [\bagkde (x)| \mathcal{L} ] \right)-f(x)\right| \to 0 $$
\end{itemize}

\item[(2)] For variance we use the well known formula defined above.

\begin{itemize}
\item \emph{BagHist.} 
\begin{enumerate}
\item[(A)] 
Because of the independence and identical distribution of the bootstrap samples, if $x \in B_0$: \begin{align*}\Var [\baghist (x)| \mathcal{L} ] 
 =  & \Var \left[ \frac{1}{B} \sum_{b=1}^{B} \fstar_b(x) | \mathcal{L} \right]= \frac{1}{B} \Var\left[ \fstar_b(x) | \mathcal{L}  \right]\\ = & \frac{1}{B} \frac{n p_{0}^{*}(1-p_{0}^{*})}{(nh)^2}\end{align*} where $p_{0}^{*}$ is equal to $\frac{\nu_0}{n}$. Taking expectation over $\mathcal{L}$ we have 
 \begin{align*} \frac{1}{Bnh^2}\Expec \left(p_{0}^{*}(1-p_{0}^{*})\right)= &\frac{1}{Bnh^2} \left[\Expec \left(\frac{\nu_0}{n} \right)  - \Var \left(\frac{\nu_0}{n} \right) - \left[\Expec \left(\frac{\nu_0}{n} \right) \right]^2\right] \\  = & \frac{1}{Bnh^2}\left(p_0-\frac{1}{n}p_0-\frac{1}{n}p_{0}^{2}-p_{0}^{2}\right) \\ = & \frac{1}{Bnh^2} \left(hf(\xi_0)-\frac{h}{n}f(\xi_0)-\frac{h^2}{n}f(\xi_0)^2-h^2f(\xi_0)^2\right)\\ = & \frac{1}{Bnh}f(\xi_0)-\frac{1}{Bn^2h}f(\xi_0)-\frac{1}{Bn^2}f(\xi_0)^2- \frac{1}{Bn}f(\xi_0)^2  \end{align*}
 which tends to 0 as $n \to \infty, nh \to \infty$ and $h \to 0$.
\item[(B)] $\Var \left[\Expec(\baghist(x)|\mathcal{L}) \right]= \Var (\histo(x))\leq \frac{p_0}{nh^2}= \frac{f(\xi_0)}{nh} \to 0$
\end{enumerate}

\item \emph{BagFP.}
\begin{enumerate}
\item[(A)] Because of the independence and identical distribution of the bootstrap samples: \begin{align*}
\Var [\bagfp (x)| \mathcal{L} ] 
 & =  \Var \left[ \frac{1}{B} \sum_{b=1}^{B} \fstar_b(x) | \mathcal{L} \right]= \frac{1}{B} \Var\left[ \fstar_b(x) | \mathcal{L}  \right]  \\
 & = \frac{1}{B} \left\{\left(\frac{1}{2}-\frac{x}{h}\right) \Var(\hat{f}^{*}_0) + \left(\frac{1}{2}+\frac{x}{h}\right) \Var(\hat{f}^{*}_1)\right\} \\ & + \frac{2}{B} \left\{ \left(\frac{1}{4}-\frac{x^2}{h^2}\right) \Cov(\hat{f}^{*}_0,\hat{f}^{*}_1) \right\}
\end{align*}
where $\hat{f}^{*}_0$ and $\hat{f}^{*}_1$ are the histogram estimations over $[-h,0]$ and $[0,h]$ respectively. As $\Var(\hat{f}^{*}_0)=\frac{np_{o}^{ *}(1-p_{o}^{ *})}{n^2h^2}$, then taking expectation:
\begin{align*}
\Expec(\Var(\hat{f}^{*}_0)) = &\Expec \left(\frac{n\frac{\nu_0}{n}(1-\frac{\nu_0}{n})}{n^2 h^2}\right)=\frac{1}{n^2h^2} \Expec \left(\nu_0\left(1-\frac{\nu_0}{n}\right)\right)\\ = & \frac{1}{n^2 h^2} \left(\Expec(\nu_0) - \frac{1}{n} \Expec (\nu_0^2)\right)\\ = & \frac{1}{n^2h^2} \left(\Expec(\nu_0) - \frac{1}{n} (\Var(\nu_0)+[\Expec (\nu_0)]^2)\right)\\ = & \frac{1}{n^2h^2} \left[np_0-\frac{1}{n}(np_0(1-p_0)+(np_0)^2) \right]\\ = & \frac{nhf(\xi_0)}{n^2h^2}- \frac{hf(\xi_0)}{n^2h^2}+\frac{h^2f(\xi_0)^2}{n^2h^2}-\frac{nh^2f(\xi_0)^2}{n^2h^2} \to 0
\end{align*}
In the same way $\Expec(\Var(\hat{f}^{*}_1)) \to 0$. As $\Cov(\hat{f}^{*}_0,\hat{f}^{*}_1)=\frac{-np_{0}^{*}p_{1}^{*}}{n^2 h^2}$, then taking expectation we have: 
\begin{align*}
\Expec \left( \Cov(\hat{f}^{*}_0,\hat{f}^{*}_1)\right)= & \frac{1}{nh^2}\Expec\left(\frac{\nu_0}{n}\frac{\nu_1}{n} \right)\leq \frac{1}{n^3h^2}\Expec(\nu_{0}^2)\Expec(\nu_{1}^2)\\ = & \frac{1}{n^3 h^2} [np_0(1-p_0)+p_0^2][np_1(1-p_1)+p_1^2]\\
= & \frac{1}{n^3h^2}(nhf(\xi_0)-nh^2f(\xi_0)^2+h^2f(\xi_0)^2)(nhf(\xi_1)-nh^2f(\xi_1)^2+h^2f(\xi_1)^2)\\ 
= & \frac{1}{n^3h^2}(n^2h^2 f(\xi_0)f(\xi_1)-n^2h^3f(\xi_0)f(\xi_1)^2+nh^3f(\xi_0)f(\xi_1)^2\\ & +\frac{1}{n^3h^2} (n^2h^4f(\xi_0)^2f(\xi_1)^2-n^2h^3f(\xi_0)^2f(\xi_1)-nh^4f(\xi_0)^2f(\xi_1)^2)\\ & +\frac{1}{n^3h^2}(nh^3f(\xi_0)^2f(\xi_1)-nh^4f(\xi_0)^2f(\xi_1)^2+h^4f(\xi_0)^2f(\xi_1)^2)\\ = & \frac{1}{n} f(\xi_0)f(\xi_1)-\frac{h}{n}f(\xi_0)f(\xi_1)^2+\frac{h}{n^2}f(\xi_0)f(\xi_1)^2 \\ & +\frac{h^2}{n}f(\xi_0)^2f(\xi_1)^2 - \frac{h}{n}f(\xi_0)^2f(\xi_1)- \frac{h^2}{n^2}f(\xi_0)^2f(\xi_1)^2\\ & + \frac{h}{n^2}f(\xi_0)^2f(\xi_1)-\frac{h^2}{n^2}f(\xi_0)^2f(\xi_1)^2+\frac{h^2}{n^3}f(\xi_0)^2f(\xi_1)^2 \to 0
\end{align*}

So we conclude that $\Expec \left(\Var [\bagfp (x)| \mathcal{L} ] \right) \to 0$

\item[(B)] We recall from \cite{scott2015multivariate} pag. 103 that  $$\Var(\Expec(\bagfp(x)|\mathcal{L}))=\Var (\fp(x))=\left( \frac{2x^2}{nh^3}+\frac{1}{2nh}\right)f(\xi_0)-\frac{f(\xi_0)^2}{n}+o\left(\frac{1}{n} \right)$$ Then if $nh \to \infty$ and $n \to +\infty$: $$|\Var(\Expec(\bagfp(x)|\mathcal{L}))| \to 0$$
\end{enumerate}

\item \emph{BagKDE.} 

\begin{enumerate}
\item[(A)] Because of the independence and identical distribution of the bootstrap samples: \begin{align*}
\Var [\bagkde (x)| \mathcal{L} ] 
 & =  \Var \left[ \frac{1}{B} \sum_{b=1}^{B} \fstar_b(x) | \mathcal{L} \right]= \frac{1}{B} \Var\left[ \fstar_b(x) | \mathcal{L}  \right] 
 \end{align*}
So we compute $\Var\left[ \fstar_b(x) | \mathcal{L}  \right]$: 
\begin{align*}
\Var\left[ \fstar_b(x) | \mathcal{L}  \right]= & \underbrace{\frac{1}{n} \left[\sum \limits_{i=1}^{n} \frac{1}{(nh)^2} K^2\left(\frac{x-x_{i}^{*}}{h} \right)\right]}_{(a)}-\underbrace{\frac{1}{n^2}\left\{ \frac{1}{nh}\sum \limits_{i=1}^n K\left( \frac{x-x_{i}^{*}}{h}\right)\right\}^2}_{(b)}
\end{align*}
and therefore $\left|\Var\left[ \fstar_b(x) | \mathcal{L}  \right] \right|\leq (a)+(b)$. Taking expectation:
\begin{enumerate}
\item[(a)] \begin{align*}\Expec \left(\frac{1}{n^3h^2} \sum \limits_{i=1}^{n}K^2\left( \frac{x-x_{i}^{*}}{h}\right)\right)= & \frac{1}{n^3h^2}\sum \limits_{i=1}^{n} \Expec \left(K^2\left( \frac{x-x_{i}^{*}}{h}\right) \right)\\ = & \frac{1}{n^2h^2} \sum \limits_{j=1}^{n} K^2\left( \frac{x-x_{ij}}{h}\right) \mathds{P}(x_i^{*}=x_{ij})\\
= & \frac{1}{n^3h^2} \sum \limits_{j=1}^{n} K^2\left( \frac{x-x_{ij}}{h}\right) \leq \frac{1}{n^3h^2} \sum \limits_{j=1}^{n} \tilde{C}\\ = & \frac{\tilde{C}}{(nh)^2} \to 0
\end{align*}
because since $K$ is bounded, $K^2$ also.
\item[(b)]
\begin{align*}
\Expec \left(\frac{1}{n^2} \left (\frac{1}{nh}\sum \limits_{i=1}^{n}K\left( \frac{x-x_{i}^{*}}{h}\right)\right)^2\right)= & \frac{1}{n^4 h^2} \Expec \left( \sum \limits_{i=1}^{n}K\left( \frac{x-x_{i}^{*}}{h}\right)\right)^2\\ = &\frac{1}{n^4h^2} \sum \limits_{j=1}^{n} \left(\sum \limits_{i=1}^{n} K\left( \frac{x-x_{i}^{*}}{h}\right)  \right)^2 \mathds{P}(x_i^{*}=x_{ij})\\ \leq &  \frac{1}{n^5(nh)^2}\sum \limits_{j=1}^{n} \left( \sum \limits_{i=1}^{n} C \right)^2\\ 
= &  \frac{C^2}{n^2(nh)^2} \to 0
\end{align*}

\end{enumerate}

So we conclude that $\Expec \left(\Var [\bagkde (x)| \mathcal{L} ] \right) \to 0$

\item[(B)] It is a well known result that $\Var \left(\Expec [\bagkde (x)| \mathcal{L} ] \right) = \Var (\kde) \to 0$
\end{enumerate}

\end{itemize}

\end{enumerate}
So, with the usual assumption of $n\to \infty, h \to 0, nh \to \infty$ this implies $L^2$ convergence for $\baghist, \bagfp$ and $\bagkde$.
\end{proof}

\section{Experiments}\label{sec:experiments}

We describe in this section a series of numerical experiments aiming to show the 
practical performance of the bagged versions of the classical density estimators. 
First, we obtain a numerical estimate of the MISE on simulated data sets created following baseline densities.
The impact of the aggregation is analyzed. 
We also use the bootstrapped version of the density estimators to 
construct a confidence interval for it and study its performance.

\subsection{Simulations}

Among the numerous possibilities of univariate densities, we choose eight simulation models partially following the work of \cite{Bourel13}. This choice presents a different degree of difficulty related to the number of modes, asymmetry, tail behavior and regularity. We denote them by $\mathcal{M}1$
to $\mathcal{M}8$. Their definition is the object of Table \ref{tab:densities} and Figure \ref{fig:densities} shows a graphical display of the densities.
The notation $\mathcal{N}(\mu, \sigma^2)$ is used to refer to a normal distribution with mean equal to $\mu$
and variance equal to $\sigma^2$, $\mathcal{U}[a, b]$ is the uniform density over the support $[a, b]$,
and $\chi^2_\nu$ is a Chi squared density with $\nu$ degrees of freedom.
Models 3, 4, 7 and 8 are mixtures of densities. Models 2 and 7 are asymmetrical.

\begin{table}[ht] \centering
\begin{tabular}{ll} \hline
Model            & Description \\ \hline
($\mathcal{M}1$) : Normal Standard & Standard Gaussian density $\mathcal{N}(0,1)$ \\
($\mathcal{M}2$) : Chi 10 & Chi-square density $\chi^2_{10}$\\
($\mathcal{M}3$) : Mix1 & $0.5 \mathcal{N}(-1,0.3)+0.5\mathcal{N}(1,0.3)$\\
($\mathcal{M}4$) : Claw & the Claw Density \citep{Marron}\\
($\mathcal{M}5$) : Triangular & Symmetric triangular density with support on [0,2]\\ 
($\mathcal{M}6$) : Uniform 0-1  & Uniform density $\mathcal{U}[0, 1]$\\
($\mathcal{M}7$) : Mix2 & $0.5 \mathcal{N}(0,1) + 0.5 \sum \limits_{i=1}^{10} \mathbf{1}_{\left(\frac{2(i-1)}{10},\frac{2i-1}{10}  \right]}$ \citep{Rigo} \\ 
($\mathcal{M}8$) : Mix3 & Mixture of uniforms $0.5 \mathcal{U}[-2, -1] + 
                          0.5 \mathcal{U}[1, 2]$ \\ \hline
\end{tabular}
\caption{Simulated univariate densities.} \label{tab:densities}
\end{table}

At each replication we draw two datasets following each density. The first one is used for estimation purposes while the second one is left-out for evaluation (either MISE or empirical covering). All the simulations are done with the \texttt{R} software (\cite{elR}).

\begin{figure}[ht]  \centering
\includegraphics[width=.9\textwidth]{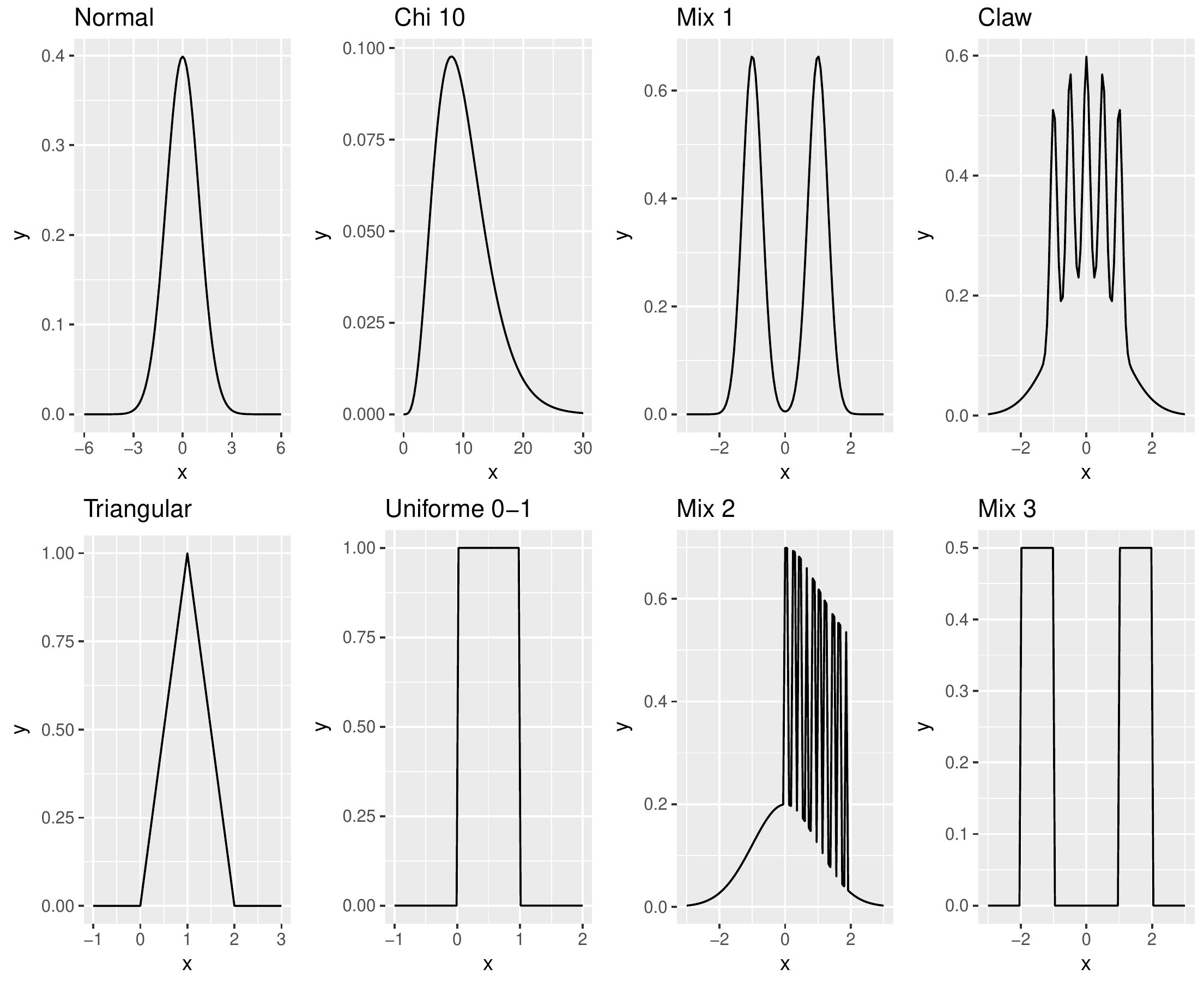}
\caption{Densities used for the simulations.} \label{fig:densities}
\end{figure}

\subsection{Quality of the estimation}

We compare density estimators of different nature. On one hand side we use three individual estimators: histograms (H), 
frequency polygons (FP)  and kernel density estimators (KDE), on the other hand, their bagged versions, respectively BagHist, BagFP and BagKDE.
Also we include the RASH estimator. We use cross validation to calibrate the bandwidth $h$ at each step of all the intermediate estimation methods. An alternative would be to use 
maximum likelihood as in \cite{Bourel13}. In our framework cross validation has, in general, a better computation behavior. Also it is more
general and may be used for example with dependent data as in time series.

\begin{figure}[h]
\includegraphics[width = \textwidth]{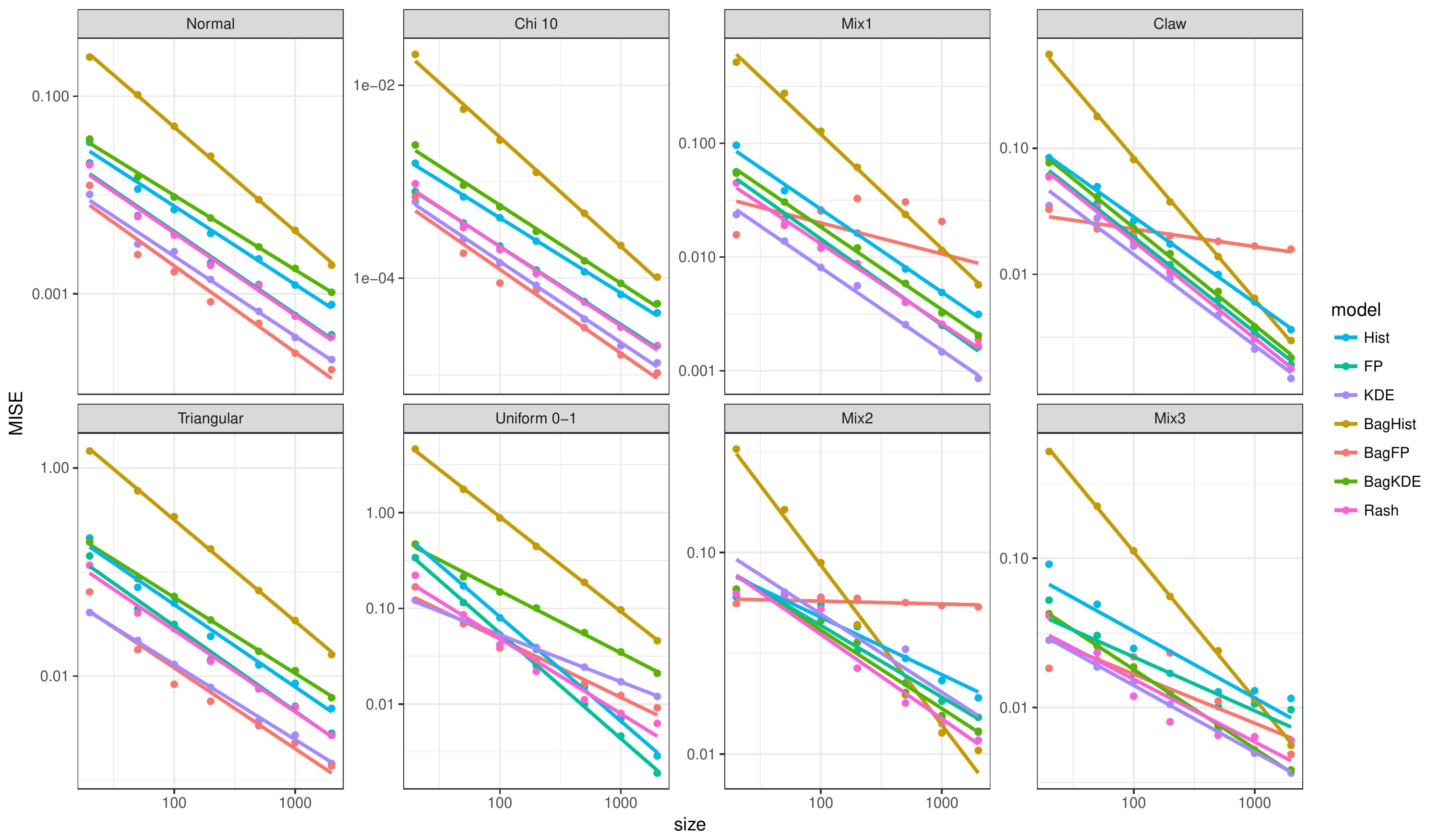}
\caption{\label{fig:mise} MISE by estimation method for the six simulation data sets in scale log.
}
\end{figure}

Figure \ref{fig:mise} represents the dependence of MISE on the sample size $n$ for the different combinations of
densities and estimators. Each point represents the average of $M=100$ times the MISE$\times 100$ of the method using $B = 200$ 
intermediate estimators for the aggregating methods.
Notice that these plots are in log-log scale which is useful to highlight the convergence behaviour.
Individual values of these plots are presented in Appendix \ref{sec:addquality}.

Let us comment these plots. First, the adjusted lines are of relative good quality since the points corresponding to each combination
density-estimator are almost aligned. Remember that each point is the mean average of $M = 100$ replicates and so the 
inner replicate variability is reduced even for a few points determining each line. 
However, in some few cases the quality of the fit is quite poor. Now, for each panel most of the adjusted
lines are almost parallels which means the methods share a similar convergence behavior. 
Comparing the four simple densities (leftmost panels) and the four mixtures (rightmost panels)
a difference in the behaviour seems to appear, at least if one looks at some bagged version
as for instance the BagFP.
Second, if one compares each individual estimator
(H, FP and KDE) with their bagged versions, the latter success to reduce the MISE in most of the situations. An important exception if BagHist which produces almost always worst results than the intermediate estimator, i.e. the histogram. However, the asymptotic behaviour
is such that with large sample sizes it is able to catch the quality of the histogram and
even overwhelm it on the four mixtures.
On regular targets, KDE (or at least its bagged version) shows a very competitive performance. However, in presence of multi-modality they loss relative competitiveness with large sample sizes. The fact that the results are not entirely satisfactory for the bagged version of KDE may be because KDE it is already a good and stable density estimator (more stable in any case than histogram) and, according with \cite{Brei-bag}, bagging kernel density estimators may be degrade the performance of this stable procedure.

\subsection{Reduction of MISE due to aggregation}

We concentrate now on aggregating methods. A natural matter to look at is the quality
of the aggregation as the number of bootstrap samples increases. For this, we examine the MISE of the bagged versions
for a range of increasing bootstrap samples.
We replicate $M =100$ times each combination of density simulation to construct the different curves.
The result of experiments are presented in Figure \ref{fig:miseagg} in a log-log scale with $n=500$ observations. 

Globally we observe that MISE decreases monotonically with increasing values of $B$ until some point between 20 and 50 bootstrap samples
after which more samples does not produce further enhancement. 

\begin{figure}[h] \centering
\includegraphics[width=\textwidth]{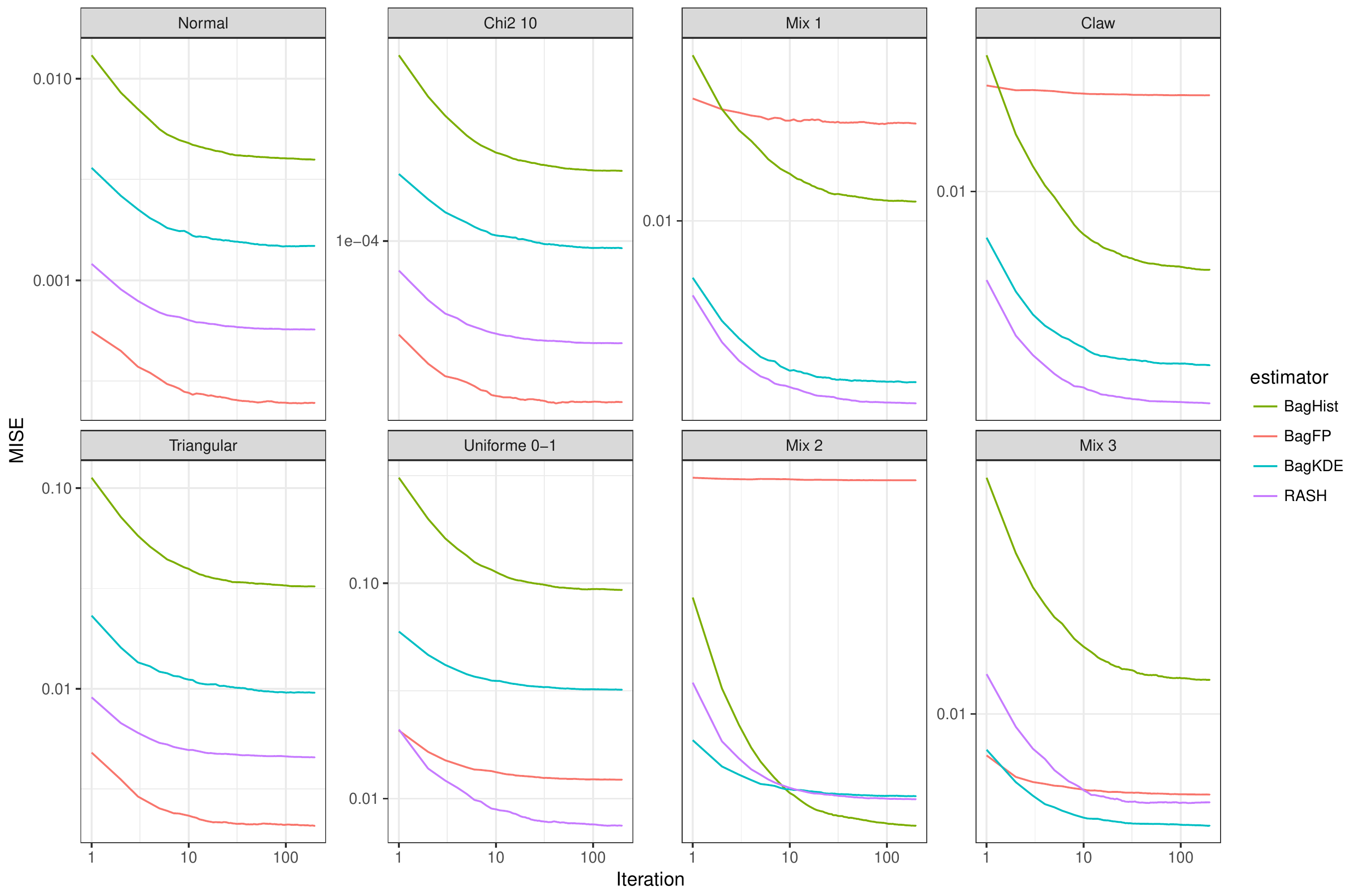}
\caption{\label{fig:miseagg} MISE error vs number of aggregates, n=500, M=100, B=200 in log-log scale.}
\end{figure}

\subsection{Variability bands}

A natural by-product of bootstrap samples is the construction of confidence bands.
For some level $\alpha$, one wants to estimate the quantities $\hat{l}_n(x)$ and $\hat{u}_n(x)$ that 
verify 
\[\Prob \lbrace{ \hat{l}_n(x) \leq f (x) \leq \hat{u}_n(x) \rbrace} \geq 1 - \alpha,\,\,\,\,\,\, \forall\,x\]
that is, the quantities are the borders of an interval that covers at the true density $f(x)$ at some confidence
level $(1 - \alpha) \times 100 \%$.
We tackle here its construction for the density estimator. Generally, a confidence band for $f$ is centered over an estimator $\hat{f}_n$ of $f$ and has the form $\hat{f}_n(x) \pm c \hat{\sigma_n}(x)$ for all $x$, with $c>0$. However, 
since nonparametric density estimators are biased, 
the usual construction does not yields on a really a confidence band for $f$.
Indeed, for a fixed $x$, due to the bias $\Expec \left(\hat{f}_n(x)\right) - f(x)$, it is not easy to derive a confidence interval 
using the pivotal quantity $\frac{\hat{f}_n(x)-f(x)}{\hat \sigma(x)}$. So, the interval is usually centered at $\overline{f}_n(x) = \Expec \left(\hat{f}_n(x)\right)$ instead of being around $f(x)$. For this reason, these confidence bands are often called \emph{variability bands}. 

We describe two popular constructions to compare with our procedure.

\begin{enumerate}
\item 
\emph{Variability Band for histograms.} Under mild conditions \citep{wasserman2006all}
the histogram estimator $\histo (x)$ is approximately unbiased for the target density $f(x)$. But the approximate variance is $f(x)/(nh)$ where $h = 1 / m$ is the inverse of the number of bins $m$. Its dependence on the unknown target is an obstacle. 
To circumvent it, \cite{scott2015multivariate} looks at $\Var \left[\sqrt{\histo (x) }\right]$ which is approximately $1 / (4 n h)$ and thus independent of $f(x)$. We define $\histom = \Expec [\histo (x)]$ as the target and as we say before the confidence band will not take account of the bias but only of the variability of the estimator. Then, using a normal approximation it is easy to
show that \citep[p. 130]{wasserman2006all}:
\[l_n(x)= \left(\max\left\{ \sqrt{\histo(x)} -c, 0\right\}\right)^2, \qquad
u_n(x)=\left(\sqrt{\histo(x)}+c \right)^2
\]
where $c= \frac{z_{\alpha/(2m)}}{2}\sqrt{\frac{m}{n}}$ give an approximate variability band for $\histom$ at $(1 - \alpha) \times 100\%$ of confidence.
\item 
\emph{Variability Band for KDE.} As we have shown with the histogram, since the variance of $\kde (x)$ also involves the true density $f$, it is more suitable to use the square root 
(see \cite{bowman1997applied}). In the case of the kernel density estimator $\text{Var}\left(\sqrt{\kde(x)}\right) \approx \frac{||K||_2^2}{4nh_n}$ and again does not depend on the 
true unknown density $f$. On this square root scale, for a fixed point $x$ we consider the interval that back to the original scale is given by

\[l_n(x) = \left(  \kde(x) - \frac{||K||_2}{\sqrt{4nh_n}}  \right)^2, \qquad
  u_n(x) = \left(  \kde(x) + \frac{||K||_2}{\sqrt{4nh_n}}  \right)^2.
\]
As we said before this is not a confidence band for the true density $f$, because of the bias so we will talk about a variability band.

\item \emph{Bootstrap based confidence interval and resulting tube}
The bootstrapped sample induces a distribution that can be used to asses the variability of the estimator. Indeed, the simple superposition of the individual estimators (histogram, frequency polygon, kernel density estimator) gives a coarse idea of the uncertainty around the aggregate estimator. More the scatter of individual individual density estimators is dispersed, higher is the variance of the estimator. For the concrete construction of the confidence band we first fix the abscissa $x \in \mathbb{R}$. Then we consider
the set of bootstrapped density estimators evaluated at that point, i.e. $\lbrace \fstar_1(x), \ldots, \fstar_B(x) \rbrace$. 
This set is a collection of $B$ univariate measures. Note that the bagged estimation is the mean average of this collection. Then, a $(1 - \alpha) \times 100 \%$ confidence interval can be obtained by considering the empirical quantiles at $\alpha/2$ and $1 - \alpha/2$ of this ensemble. 
\end{enumerate}

\begin{figure}[!ht]
\begin{center}
\includegraphics[width=\textwidth]{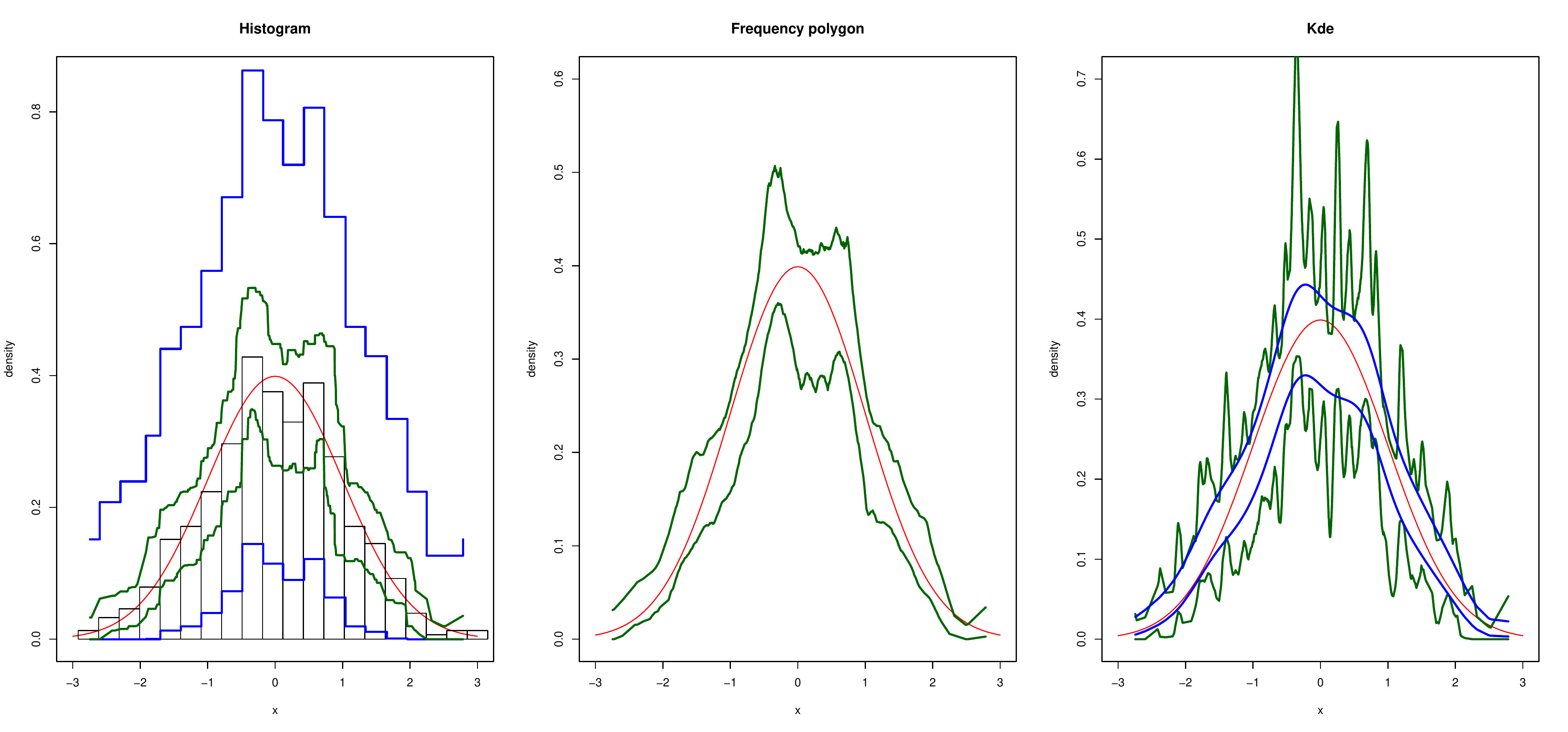}
\caption{Variability bands for the estimation of a standard normal target (red)
based on Histograms (left), Frequency polygons (middle) and KDE (right).
Confidence intervals are either constructed for individual estimators (in blue) when available or constructed using bootstrap sampling (in green).
}\label{BConfHFPKDE}
\end{center}
\end{figure}

In Figure \ref{BConfHFPKDE} we show the different constructions generated by these methods for a standard normal target. Each panel correspond to one of the three intermediate density estimators (Histogram, frequency polygons and KDE). 
At each time, the bootstraped construction for the confidence interval is 
represented. For the histogram and KDE we also draw the constructions detailed above. The confidence interval for the histogram succeeds to cover the true density but produces an arguably too large band. For KDE the confidence for the intermediate
estimator is very good, and the one obtained by bootstrap sampling presents a relatively high variability.  

We compare the alternative constructions of the confidence band using two metrics. 
The aim is to obtain the narrowest band that warranties a given nominal coverage.
For this, we consider the empirical coverage of the bands and its mean width.
Let us call $\hat l_n (t_i)$ and $\hat u_n(t_i)$ the lower and upper bounds of the 
confidence bands, evaluated at points $t_i, i = 1, 2, \ldots, N$. Then, we call
the empirical mean coverage of the target $f(x)$ the quantity
\[ \frac{1}{N} \sum_{i = 1}^{N}
              \mathbb{I}_{\lbrace \hat l_n (t_i) \leq f(t_i) \leq \hat u_n(t_i)\rbrace},  \ \]
where $\mathbb{I}_A$ is the indicator function of the set $A$. The mean width of the interval
is defined by 

\[ \frac{1}{N} \sum_{i = 1}^{N} \left\lbrace
             \hat u_n(t_i) - \hat l_n (t_i) 
             \right\rbrace.
\]

In this experiment we set the confidence level at $95\%$ to construct the 
variability bands. 
We give as reference the variability band constructed through the kernel density estimator as explained before (we denote this method as KDE-sm).
The construction for the individual histograms produces too larges bands which
always cover the true density. For this reason they are not presented in Table \ref{tab:tubos}. 

In general, the bands cover reasonably well the simplest densities (on the top rows) having more difficulties with the more exotic density models (on the bottom rows). Particularly, the last three densities in the table are too difficult targets producing very low empirical coverings. If we look at the methods, the construction using bootstraped histograms is the best one among the boostrap based ones, and give a fair competitor to KDE-sm.

\begin{table}[h] 
\centering
\begin{tabular}{|l|cccc|cccc|} \hline
            & \multicolumn{4}{c}{Coverage} & \multicolumn{4}{|c|}{Mean width}	\\
Density  	& Hist  &   FP	& KDE   & KDE-sm & Hist & FP &	KDE & KDE-sm \\  \hline
Normal      & 95.45 & 92.54 & 92.83 & 96.54 & 0.24 & 0.18 & 0.19 & 0.09 \\ 
Chi10        & 95.75 & 92.88 & 92.96 & 94.79 & 0.06 & 0.04 & 0.04 & 0.02 \\ 
Mix1        & 95.55 & 93.11 & 90.34 & 95.27 & 0.40 & 0.30 & 0.25 & 0.19 \\ 
Claw        & 95.05 & 89.79 & 87.07 & 91.33 & 0.30 & 0.22 & 0.25 & 0.22 \\ 
Triangular  & 95.79 & 92.74 & 92.69 & 94.94 & 0.66 & 0.51 & 0.45 & 0.23 \\ 
Uniform 0-1    & 92.20 & 89.40 & 88.61 & 90.56 & 1.06 & 0.84 & 0.73 & 0.52 \\ 
Mix2       & 77.78 & 41.00 & 63.14 & 47.73 & 0.36 & 0.26 & 0.26 & 0.21 \\ 
Mix3       & 88.86 & 85.59 & 83.41 & 89.84 & 0.37 & 0.29 & 0.22 & 0.27 \\ 
 \hline
\end{tabular}
\caption{\label{tab:tubos} Mean empirical coverage and mean interval widths for the densities and estimators
considered.} 
\end{table}

\section{Conclusions\label{sec:discussion}}
In this work we present three univariate density estimators obtained by aggregation such as in Bagging. For each method, the intermediate estimators are histograms, frequency polygons or kernel density estimators. We prove the $L^2$ consistency of the three estimators and do several simulations over densities with different characteristics. 
Also, we bring a way to compute a kind of confidence band, which is more close to a point wise variability band in the sense that the authors who studied on this subject give. This construction needs a deeper study to be able to draw more conclusive conclusions about it. Another clue for future work is to investigate the natural extension of considering the bagged construction over multivariate  densities. 


\subsection*{Acknowledgements}
We would like to thank project ECOS-2014 \emph{Aprendizaje Autom\'atico para la Modelizaci\'on y el An\'alisis de Recursos Naturales}, n$^{o}$ U14E02, the \emph{LIA-IFUM} and the \emph{ANII} -Uruguay for their financial support.

\bibliographystyle{plain}      
\bibliography{references}

\begin{thebibliography}{23}
\providecommand{\natexlab}[1]{#1}
\providecommand{\url}[1]{{#1}}
\providecommand{\urlprefix}{URL }
\expandafter\ifx\csname urlstyle\endcsname\relax
  \providecommand{\doi}[1]{DOI~\discretionary{}{}{}#1}\else
  \providecommand{\doi}{DOI~\discretionary{}{}{}\begingroup
  \urlstyle{rm}\Url}\fi
\providecommand{\eprint}[2][]{\url{#2}}

\bibitem[{Bourel and Ghattas(2013)}]{bourel-ghattas}
Bourel M, Ghattas B (2013) Aggregating density estimators: an empirical study.
  Open Journal of Statistics 3(5)

\bibitem[{Bourel et~al.(2014)Bourel, Ghattas, and Fraiman}]{Bourel13}
Bourel M, Ghattas B, Fraiman R (2014) Random average shifted histograms.
  Computational Statistics \& Data Analysis 79:149--164

\bibitem[{Bowman and Azzalini(1997)}]{bowman1997applied}
Bowman A, Azzalini A (1997) Applied Smoothing Techniques for Data Analysis: The
  Kernel Approach with S-Plus Illustrations. Oxford Statistical Science Series,
  OUP Oxford

\bibitem[{Breiman(1996)}]{Brei-bag}
Breiman L (1996) Bagging predictors. Machine Learning 24(2):123--140

\bibitem[{Breiman(2001)}]{Brei-RF}
Breiman L (2001) Random forests. Machine Learning 45(1):5--32

\bibitem[{Efron(1979)}]{efron1979}
Efron B (1979) Bootstrap methods: Another look at the jackknife. Ann Statist
  7(1):1--26

\bibitem[{Efron and Tibshirani(1993)}]{efron}
Efron B, Tibshirani R (1993) An Introduction to the Bootstrap. Monographs on
  statistics and applied probability, Chapman \& Hall

\bibitem[{Fisher(1932)}]{fisher1932}
Fisher R (1932) Statistical Methods for Research Workers. Biological monographs
  and manuals, Oliver and Boyd

\bibitem[{Freund and Schapire(1997)}]{FSc}
Freund Y, Schapire R (1997) A decision-theoretic generalization of on-line
  learning and an application to boosting. Journal of Computer and System
  Sciences 55(1):119--139

\bibitem[{Glodek et~al.(2013)Glodek, Schels, and Schwenker}]{Glodek2013}
Glodek M, Schels M, Schwenker F (2013) Ensemble gaussian mixture models for
  probability density estimation. Computational Statistics 28(1):127--138

\bibitem[{Hall(1997)}]{hall1997bootstrap}
Hall P (1997) The Bootstrap and Edgeworth Expansion. Springer Series in
  Statistics, Springer New York

\bibitem[{Marron and Wand(1992)}]{Marron}
Marron J, Wand M (1992) Exact mean integrated square error. The Annals of
  Statistics 20(2):712--736

\bibitem[{Ridgeway(2002)}]{Ridgeway}
Ridgeway G (2002) Looking for lumps: Boosting and bagging for density
  estimation. Comput Stat Data Anal 38(4):379--392

\bibitem[{Rigollet and Tsybakov(2007)}]{Rigo}
Rigollet P, Tsybakov AB (2007) Linear and convex aggregation of density
  estimators. Math Methods Statist 16(3):260--280

\bibitem[{Rosset and Segal(2002)}]{Rosset}
Rosset S, Segal E (2002) Boosting density estimation. In: In Advances in Neural
  Information Processing Systems 15, MIT Press, pp 641--648

\bibitem[{Scott(1985{\natexlab{a}})}]{scottash}
Scott D (1985{\natexlab{a}}) Averaged shifted histogram: Effective
  nonparametric density estimators inseveral dimensions. The Annals of
  Statistics 13(3):1024--1040

\bibitem[{Scott(1985{\natexlab{b}})}]{scott1985}
Scott D (1985{\natexlab{b}}) Frequency polygons: Theory and application.
  Journal of the American Statistical Association 80(390):348--354,
  \urlprefix\url{http://www.jstor.org/stable/2287895}

\bibitem[{Scott(2015)}]{scott2015multivariate}
Scott D (2015) Multivariate Density Estimation: Theory, Practice, and
  Visualization. Wiley Series in Probability and Statistics, Wiley

\bibitem[{Scott(1979)}]{scott1979}
Scott DW (1979) On optimal and data-based histograms. Biometrika 66:605--610

\bibitem[{Smyth and Wolpert(1999)}]{Smyth}
Smyth P, Wolpert D (1999) Linearly combining density estimators via stacking.
  Mach Learn 36(1-2):59--83

\bibitem[{Song et~al.(2004)Song, Yang, and Pavel}]{Song}
Song X, Yang K, Pavel M (2004) Density boosting for gaussian mixtures. Neural
  Information Processing 3316:508--515

\bibitem[{Wasserman(2006)}]{wasserman2006all}
Wasserman L (2006) All of Nonparametric Statistics. Springer Texts in
  Statistics, Springer New York

\bibitem[{Wolpert(1992)}]{Wolpert}
Wolpert D (1992) {Stacked Generalization}. Neural Networks 5:241--259

\end{thebibliography}

\appendix

\section{Additional results} 

\subsection*{Quality of the estimation\label{sec:addquality}}

For sake of completeness we present in this appendix the individual values of Figure \ref{fig:mise}.
In the following tables (Tables 3 to 7), values are $100\times$MISE obtained as mean average over 100 replicates.
At each line, best results are  shown in blue.

\def\best{\color{blue}}
\def\worst{\color{red}}

\begin{table}[ht]
\centering
\begin{tabular}{rrrrrrrr}
  \hline
 & Hist & FP & KDE & BagHist & BagFP & BagKDE & RASH \\ 
  \hline
Normal & 1.1447 & 0.6171 & 0.3181 & 10.3243 & \best 0.2487 & 1.5195 & 0.6025 \\ 
  Chi10 & 0.0696 & 0.0374 & 0.0254 & 0.5635 & \best 0.0181 & 0.0919 & 0.0334 \\ 
  Mix1 & 3.8260 & 2.1742 & \best 1.3801 & 27.4794 & 1.8937 & 3.0423 & 1.9464 \\ 
  Claw & 4.9526 & 3.6531 & 2.7909 & 17.8066 & \best 2.2902 & 4.1301 & 3.4380 \\ 
  Triangular & 7.1220 & 4.4168 & 2.2008 & 60.1150 & \best 1.7936 & 8.6147 & 4.0222 \\ 
  Uniform 0-1 & 17.2112 & 11.4841 & \best 6.8795 & 175.0100 & 6.9725 & 21.2978 & 8.5623 \\ 
  Mix2 & 6.2811 & 6.0856 & 6.3647 & 16.3470 & 5.8546 & \best 5.7913 & 5.9907 \\ 
  Mix3 & 4.9186 & 3.0408 & \best 1.8700 & 22.3239 & 2.0799 & 2.5847 & 2.3357 \\ 
   \hline
\end{tabular}
\caption{MISE for sample size $n=50$}
\end{table}

\begin{table}[ht]
\centering
\begin{tabular}{rrrrrrrr}
  \hline
 & Hist & FP & KDE & BagHist & BagFP & BagKDE & RASH \\ 
  \hline
Normal & 0.7098 & 0.4016 & 0.2665 & 4.9699 & \best 0.1665 & 0.9510 & 0.3908 \\ 
  Chi10 & 0.0422 & 0.0215 & 0.0144 & 0.2689 & \best 0.0089 & 0.0549 & 0.0198 \\ 
  Mix1 & 2.5451 & 1.3417 & \best 0.8085 & 12.7401 & 2.5901 & 1.9115 & 1.1964 \\ 
  Claw & 2.6535 & 1.9604 & \best 1.6735 & 8.1400 & 2.1407 & 2.3149 & 1.8267 \\ 
  Triangular & 5.0242 & 3.1358 & 1.2952 & 33.9450 & \best 0.8356 & 5.8185 & 2.8512 \\ 
  Uniform 0-1 & 7.9732 & 5.2470 & 5.2962 & 87.8440 & \best 3.8460 & 14.8139 & 4.1718 \\ 
  Mix2 & 5.8523 & 5.5056 & 6.0074 & 8.9022 & 5.9432 & \best 4.5841 & 5.2529 \\ 
  Mix3 & 2.4888 & 1.6908 & 1.4754 & 11.2250 & 2.1891 & 1.8702 & \best 1.1925 \\ 
   \hline
\end{tabular}
\caption{MISE for sample size $n=100$}
\end{table}

\begin{table}[ht]
\centering
\begin{tabular}{rrrrrrrr}
  \hline
 & Hist & FP & KDE & BagHist & BagFP & BagKDE & RASH \\ 
  \hline
Normal & 0.4077 & 0.2059 & 0.1394 & 2.4724 & \best 0.0827 & 0.5840 & 0.1937 \\ 
  Chi10 & 0.0242 & 0.0121 & 0.0084 & 0.1246 & \best 0.0075 & 0.0305 & 0.0111 \\ 
  Mix1 & 1.6231 & 0.8703 & \best 0.5597 & 6.1511 & 3.2618 & 1.2028 & 0.8753 \\ 
  Claw & 1.7414 & 1.1899 & \best 0.9338 & 3.7563 & 2.0185 & 1.4601 & 1.0360 \\ 
  Triangular & 2.4070 & 1.4362 & 0.7807 & 16.6818 & \best 0.5730 & 3.4612 & 1.3875 \\ 
  Uniform 0-1 & 3.7493 & 2.5354 & 3.9036 & 44.4120 & 2.7891 & 10.1004 & \best 2.1922 \\ 
  Mix2 & 4.2920 & 3.2531 & 5.7958 & 4.3796 & 5.9281 & 3.5878 & \best 2.6659 \\ 
  Mix3 & 1.6959 & 1.2517 & 1.0518 & 5.5502 & 2.3213 & 1.2049 & \best 0.8032 \\ 
   \hline
\end{tabular}
\caption{MISE for sample size $n=200$}
\end{table}

\begin{table}[ht]
\centering
\begin{tabular}{rrrrrrrr}
  \hline
 & Hist & FP & KDE & BagHist & BagFP & BagKDE & RASH \\ 
  \hline
Normal & 0.2254 & 0.1236 & 0.0663 & 0.8958 & \best 0.0500 & 0.2977 & 0.1203 \\ 
  Chi10 & 0.0116 & 0.0057 & 0.0038 & 0.0470 & \best 0.0031 & 0.0151 & 0.0056 \\ 
  Mix1 & 0.7855 & 0.3997 & \best 0.2542 & 2.3672 & 3.0443 & 0.5857 & 0.4018 \\ 
  Claw & 0.9964 & 0.6338 & \best 0.4747 & 1.3815 & 1.8236 & 0.7305 & 0.5522 \\ 
  Triangular & 1.2801 & 0.7536 & 0.3709 & 6.6334 & \best 0.3338 & 1.7276 & 0.7614 \\ 
  Uniform 0-1 & 1.5125 & \best 0.9942 & 2.4329 & 18.6821 & 1.6281 & 5.5846 & 1.1093 \\ 
  
  Mix2 & 2.9831 & 2.2515 & 3.3141 & 1.9641 & 5.6468 & 2.0181 & \best 1.7904 \\ 
  Mix3 & 1.2756 & 1.0160 & 0.7029 & 2.3993 & 1.1044 & 0.7520 & \best 0.6518 \\ 
   \hline
\end{tabular}
\caption{MISE for sample size $n=500$}
\end{table}

\begin{table}[ht]
\centering
\begin{tabular}{rrrrrrrr}
  \hline
 & Hist & FP & KDE & BagHist & BagFP & BagKDE & RASH \\ 
  \hline
Normal & 0.1220 & 0.0599 & 0.0360 & 0.4380 & \best 0.0249 & 0.1804 & 0.0591 \\ 
  Chi10 & 0.0068 & 0.0031 & 0.0020 & 0.0218 & \best 0.0016 & 0.0088 & 0.0031 \\ 
  Mix1 & 0.4877 & 0.2512 & \best 0.1465 & 1.1433 & 2.0536 & 0.3230 & 0.2569 \\ 
  Claw & 0.6064 & 0.3468 & \best 0.2559 & 0.6452 & 1.6762 & 0.3832 & 0.3082 \\ 
  Triangular & 0.8526 & 0.5124 & 0.2713 & 3.4204 & \best 0.2305 & 1.1284 & 0.5016 \\ 
  Uniform 0-1 & 0.7052 & \best 0.4630 & 1.7129 & 9.6299 & 1.2285 & 3.5032 & 0.7965 \\ 
  Mix2 & 2.3140 & 1.8291 & 1.5237 & \best 1.2728 & 5.4538 & 1.5519 & 1.4157 \\ 
  Mix3 & 1.2970 & 1.0580 & \best 0.4948 & 1.1314 & 0.6291 & 0.5210 & 0.6376 \\ 
   \hline
\end{tabular}
\caption{MISE for sample size $n=1000$}
\end{table}

\end{document}